\newtheorem{theorem}{Theorem}
\newtheorem{lemma}{Lemma}
\newtheorem{definition}{Definition}
\newcommand{\comment}[1]{{ }}
\title{\LARGE \bf
A Model-Free Optimal Control Method With Fixed Terminal States and Delay
}
\author{Mi~Zhou,
        Erik~Verriest,
        and~Chaouki~Abdallah
\thanks{Mi Zhou, Erik Verriest, and Chaouki Abdallah are with the School of Electrical and Computer Engineering, Georgia Institute of Technology, Atlanta, GA 30332. Emails:
        {\tt\small mzhou91@gatech.edu, erik.verriest@ece.gatech.edu, ccabdallah@gatech.edu.}}}
\begin{document}
\maketitle
\thispagestyle{empty}
\pagestyle{empty}
\begin{abstract}
Model-free algorithms are brought into the control system's research with the emergence of reinforcement learning algorithms.
However, there are two practical challenges of reinforcement learning-based methods.
First, learning by interacting with the environment is highly complex.
Second, constraints on the states (boundary conditions) require additional care since the state trajectory is implicitly defined from the inputs and system dynamics.
To address these problems, this paper proposes a new model-free algorithm based on basis functions, gradient estimation, and the Lagrange method.
The favorable performance of the proposed algorithm is shown using several examples under state-dependent switches and time delays.
\end{abstract}

\section{Introduction}
Optimal control is a widely-studied topic in robotics \cite{aeron}, aerospace engineering \cite{rocket}, and industrial applications such as fermentation \cite{fermentation} and inventory control \cite{inventory}.
Optimality conditions have been derived for systems with differentiable dynamics \cite{RichardVinter}, piecewise differentiable dynamics \cite{Mi-ACC2022,Mi-ACC2023} and time-delayed systems \cite{Bensoussan,BOCCIA}, as well as the combination of state-dependent switching and time delays \cite{Mi-TD}.
Direct and indirect numerical algorithms have also been proposed in \cite{ICLOCS2, Leek, GPOPS, nosnoc}. 

In addition to the above-mentioned numerical methods, differential dynamic programming (DDP) is a well-known and widely used method in applications.
Differential dynamic programming (DDP), first proposed by David Mayne in 1965, is an optimal control algorithm for trajectory optimization.
Instead of optimizing over the full state space, it optimizes around a nominal trajectory by taking local quadratic models of the dynamics and cost functions.
This allows one to find a local optimal solution to a nonlinear trajectory optimization problem.
DDP has wide applications in real robotic control.
However, it has two main challenges: 1) calculating the dynamics derivative during optimization is a computational bottleneck; 2) handling constraints on the states requires additional care \cite{unscentedDDP}.
Thus, in \cite{unscentedDDP}, the authors proposed a constrained unscented dynamic programming method for optimal control problems which eliminated the dynamics derivative computations and supported general state and input constraints using augmented Lagrangians.
The idea is to replace the gradient and Hessian calculations with approximations computed from a set of sample points.
In \cite{equalityDDP}, the authors proposed a differential dynamic programming method to solve optimal control with fixed terminal states.
DDP requires the smoothness of the system dynamics and stage cost function.
Many practical systems however have hybrid properties, for example, 1-D bouncing balls, bipedal robots walking, perching quadcopters, and so on.
Thus, the classic DDP-based methods fail to work for state-dependent switched systems and time-delayed systems.

With the development of model-free methods such as Koopman operator theory \cite{KoopmanOC} and reinforcement learning \cite{RLbook}, convenient APIs such as TensorFlow and PyTorch, and fast computing resources such as GPU and TPU, researchers now have access to model-free methods for control applications.
One difficulty of using reinforcement learning is its complexity and limitations in solving fixed terminal state optimal control problems.
In this paper, and as an extension of our work in \cite{Mi_SCC}, we address these problems by proposing a new model-free method.

A challenge for the optimal control of state-dependent switched systems is the non-differentiability of the co-state and state dynamics at the switching interface \cite{Mi-ACC2022,Mi-ACC2023}.
For time-delayed systems, the optimality conditions imply a strict coupling between the state and future co-state \cite{Mi-TD}.
Thus, we propose a method that does not require the knowledge of the system's dynamics and finds an optimal control policy based on gradient descent.
Our algorithm first parameterizes the control input with a set of basis functions,
then the gradient of the objective of the function with respect to these parameters is obtained by random sampling and the least-square method.
Finally, gradient descent is used to update the parameters.
When considering the constraint of the terminal state, we augment the equality constraint into the cost function and use a gradient accent method to update the Lagrangian multipliers.
Convergence of the proposed algorithm is analyzed and three examples are provided to show the performance of proposed algorithms.
A user-friendly toolbox based on MATLAB APP Designer is also open-sourced.

This article is organized as follows:
In Section \ref{sec:problem}, we formulate our optimal control problem of state-dependent switched systems and time-delayed systems.
Then in Section \ref{sec:algo}, we introduce our proposed model-free algorithm.
Section \ref{sec:conv} presents the proof of convergence of the proposed method.
Section \ref{sec:GUI} presents the graphical user interface (GUI) of the proposed toolbox.
In Section \ref{sec:simulation}, we present three examples using the proposed algorithm.
Finally, we conclude in Section \ref{sec:conclusion} and suggest future improvements of the proposed algorithm.
\section{Problem formulated} \label{sec:problem}
Our optimal control problem is defined as follows
\begin{align}
    \min J= \Psi(x(t_f)) +\int_{t_0}^{t_f} L(x,u) \mathrm{d}t \\
\nonumber    s.t., \dot x(t) = f(x(t),u(t)) \\
\nonumber    x(t_0)=x_0, \; x(t_f) = x_f.
\end{align}
When the system dynamics are region-dependent discontinuous, i.e., $f(x,u) = f_i(x,u), \forall x \in \mathcal{R}_i, i\in [1,2,\cdots ,k]$, $\mathcal{R}_i$ denotes region $i$, $\mathcal{R}_i \cap \mathcal{R}_j = \{x|g_{ij}(x)=0\}$, we call it state-dependent switched systems.

When the system dynamics have the form
\begin{align}
\dot x(t)=f(x(t), x(t-\tau), u(t)), t\in [t_0, t_f] \\
\nonumber x(t) = \phi(t), -\tau <t<t_0,
\end{align}
we call it time-delayed systems with constant delay in the states.
Without loss of generality, we let $t_0=0$ in this article.
It is well-known that under appropriate conditions, the solution of the defined optimal control problem may exist but may not necessarily be unique.
In our case, we aim to find at least a suboptimal control solution.
Furthermore, the following constraints are required
\begin{enumerate}
    \item The state $x(t)$ is at least an absolutely continuous function that satisfies the system dynamics.
    \item The system is controllable.
    \item The system dynamics is piecewise Lipschitz continuous.
\end{enumerate}
Based on these assumptions, we propose next a model-free algorithm to find a suboptimal control policy.
\section{Proposed algorithm}\label{sec:algo}
We first discretize the system into $N$ intervals using the first-order Euler method, i.e., $x_{t+1} = x_t+f(x_t, u_t)dt$.
The control input is parameterized by a linear combination of parameters $\theta_i$ and basis functions $\phi_i(t)$.
That is
\begin{align}
u(t) = \sum_{i=1}^m \theta_i \phi_i(t).
\end{align}

\subsection{Gradient estimation}
The least squares estimator is an unbiased estimator.
In this article, we obtain the cost function gradient with respect to the parameters by the least square method.
The gradient of the cost function using the finite difference method can be firstly written as
\begin{align*}
\frac{\partial J}{\partial \theta_i} = \frac{J(\theta_i+\epsilon e_i)-J(\theta_i-\epsilon e_i)}{2\epsilon},
\end{align*}
where $e_i$ are unit directional vectors; $\epsilon$ is a small constant denoting the magnitude of the perturbation of parameters $\theta_i$.
Write $\nabla J(\theta) = [\frac{\partial J}{\partial \theta_1}, \cdots, \frac{\partial J}{\partial \theta_m}]$ and $\Delta \theta = [\Delta \theta_1, \cdots, \Delta \theta_m]$.
The Taylor expansion of $J(\theta+\Delta \theta)$ is
\begin{align*}
 J(\theta+\Delta \theta) = J(\theta)+\nabla J(\theta)\Delta \theta +O((\Delta \theta)^2) .
\end{align*}
The gradient using the least square method can then be obtained as
\begin{align}
\nabla J_\theta \approx (\Delta \theta^\top \Delta \theta)^{-1} \Delta \theta \Delta J,
\end{align}
where $\Delta J=J(\theta+\Delta \theta)-J(\theta)$.

\subsection{Augmented Lagrangian method and Dual decomposition}
For the optimal control problem with fixed terminal states, when augmenting the terminal constraints, we have
\begin{align} \label{eqn:constrainedJ}
J=\Psi(x(t_f))+\int_0^{t_f} L(x,u)\mathrm{d}t + \mu(x(t_f)-x_f),
\end{align}
where $\mu$ is the Lagrange multiplier.

Using this Lagrange relaxation technique, the constrained optimization problem is transformed into an equivalent unconstrained problem.
The unconstrained problem is thus
\begin{align}
\max \min_{\theta}\;  J(\theta, \mu),
\end{align}
where $J$ is defined in Eqn. \eqref{eqn:constrainedJ}.
The goal is to find a saddle point $(\theta^*, \mu^*)$, which is a feasible solution.

Dual ascent is a classical method that finds a feasible point for equality-constrained optimization problems. 
The steps of dual ascent is
\begin{align} \label{eqn:dualUpdate}
\nonumber\theta_{n+1} &= \mathrm{argmin}_{\theta} \; J(\theta, \mu_n),  \; \mathrm{(minimization\; step)}\\
\mu_{n+1} &= \mu_n + \beta_n(x(t_f)-x_f), \; \mathrm{(update\; dual \; variable)}
\end{align}
where $\beta_k$ is the step size.

The convergence of dual ascent relies on the strict convexity of the objective function.
To ensure robustness, one can add another augmented quadratic term
\begin{align*}
J=\Psi(x(t_f))+\int_0^{t_f} L(x,u)\mathrm{d}t + \mu(x(t_f)-x_f)+\frac{\rho}{2}||x(t_f)-x_f||_2^2,
\end{align*}
where $\rho$ is called the penalty parameter.
Then we can use the above dual ascent iteration to obtain $\theta$ and $\mu$.
Adding the term can make the objective function differentiable under milder conditions and it converges faster compared to the one without this term.

Instead of using the iteration equation \eqref{eqn:dualUpdate}, we estimate the gradient of the objective function concerning the multiplier $\mu$.
Thus the updating equations for the parameters and the Lagrangian multiplier are
\begin{align*}
\theta_{n+1} =\Gamma_\theta \left[ \theta_{n} - \alpha_n \frac{\partial J(\theta_n, \mu_n)}{\partial \theta_n}\right ] \\
\mu_{n+1} = \mu_n + \beta_n \frac{\partial J(\theta_n, \mu_n)}{\partial \mu_n},
\end{align*}
where $\alpha_n$ is the step size, $\Gamma_{\ast}$ denotes a projection operator that keeps the control input in an admissible set.

\begin{lemma}
 Assume
\begin{align*}
\sum_{n=1}^\infty \alpha_n=\sum_{n=1}^\infty \beta_n=\infty, \;
\sum_{n=1}^\infty (\alpha_n^2+\beta_n^2)<\infty, \;
\frac{\beta_n}{\alpha_n} \rightarrow 0.
 \end{align*}
 The iterates $(\theta_n, \mu_n)$ converge to a fixed point {could be a local one} almost surely, which is a feasible solution.
\end{lemma}

\subsection{Framework of proposed algorithm}
Figure \ref{fig:algo} shows the framework of the proposed algorithm.
In this framework, we first parameterize the control input using traditional basis functions.
Then the gradient is estimated using the finite difference and least squares method.
The gradient is then used to update the parameters.
The algorithm stops when the $J$ converges, i.e., $|J_{n+1}-J_{n}|<tol$ where $tol$ is a small tolerance value.
\begin{figure}[!htp]
    \centering   \includegraphics[width=\linewidth]{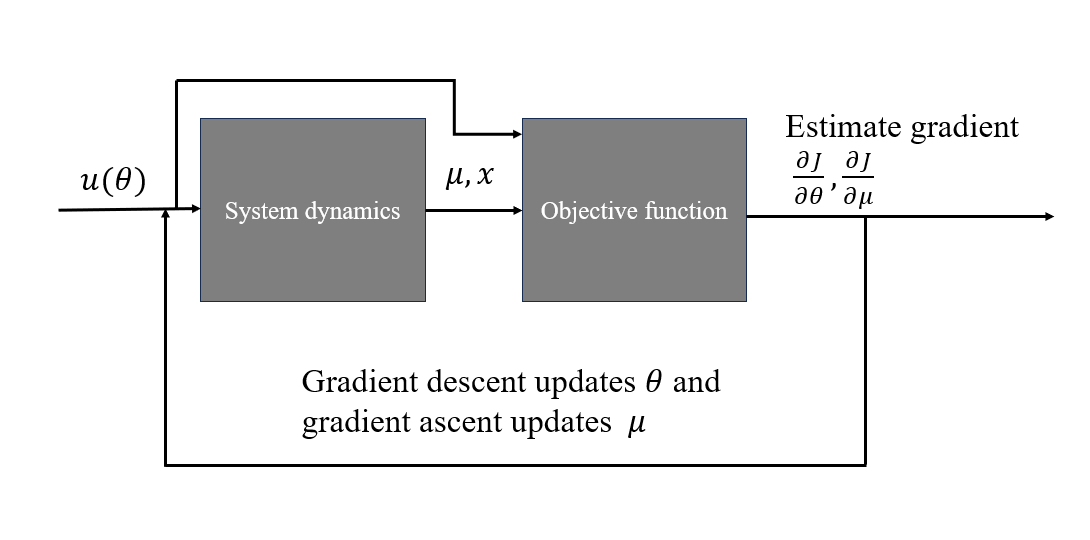}
    \caption{Framework of the proposed algorithm.}
    \label{fig:algo}
\end{figure}

\section{Convergence analysis}\label{sec:conv}
In this section, we prove the convergence of the proposed algorithm under some assumptions.
Firstly, we introduce the following definition of convergence:
\begin{definition}[Epi-convergence]
Let $U$ be a metric space and $J_n: U \rightarrow \mathbb{R}$ a real-valued function for each natural number. We say that the sequence $J_n$ epi-converges to a function $J: U \rightarrow \mathbb{R}$ if for each $u\in U$, 
\begin{align*}
\lim_{n\rightarrow \infty}\; \inf \;J_n(u_n) \geq J(u), \quad \forall u_n \rightarrow u ,
\end{align*}
and
\begin{align*}
\lim_{n\rightarrow \infty} \; \sup \; J_n(u_n) \leq J(u), \quad \forall u_n \rightarrow u .
\end{align*}
\end{definition}

Recall that $u(s,t) \sim \hat{u}(s,t)=\sum_{j=1}^m \theta_j(s)\phi_j(t)$ where $s$ denotes the index of a sequence of controllers approximating an optimal controller.
Here we will write $\hat{u}(t)$ as a truncated approximation of the optimal control $u(t)$.
\begin{lemma}(\cite{errbound,errBound2}) \label{lemma:u}
The theoretical error bound of $u(t)-\hat{u}(t)$ using basis with degree $m$ for 
$u\in W^{k,2}$ is
\begin{align*}
||u(t)-\hat{u}(t)||^2_{L^2(\Omega)} < \frac{C}{m^{k/n_u}} ,
\end{align*}    
where $C$ is some constant; $n_u$ is the dimension of the control input.
\end{lemma}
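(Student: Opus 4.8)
The plan is to recognize this as a classical Jackson/Bramble--Hilbert type approximation estimate and reduce it to the best-approximation error of a smooth function by the chosen $m$-dimensional basis. First I would make the counting precise: the $m$ basis functions $\phi_1,\dots,\phi_m$ resolve a bounded $n_u$-dimensional domain $\Omega$, so the effective resolution in each coordinate is $h \sim m^{-1/n_u}$. This single identification is the only place the exponent $1/n_u$ enters, and it turns the claim into the standard statement that the best $L^2(\Omega)$ approximation of $u \in W^{k,2}(\Omega)$ from $\mathrm{span}\{\phi_1,\dots,\phi_m\}$ decays like $h^{k}$, i.e. like $m^{-k/n_u}$, which is exactly what Lemma \ref{lemma:u} asserts.

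For the core estimate I would invoke the Bramble--Hilbert lemma on a reference cell: the error of the best local polynomial approximation (of degree matched to the basis) to $u$ is controlled by the $W^{k,2}$ seminorm of $u$ on that cell, i.e. by $\|D^k u\|_{L^2}$ there. A standard scaling argument then transfers this reference estimate to a generic cell of diameter $h$, producing a factor $h^{k}$ in front of the local seminorm. Summing the squared local errors over the $\mathcal{O}(m)$ cells and using additivity of the seminorms gives
\begin{align*}
\|u - \hat u\|_{L^2(\Omega)}^2 \le C\, h^{2k}\, |u|_{W^{k,2}(\Omega)}^2 \sim \frac{C}{m^{2k/n_u}},
\end{align*}
with $C$ absorbing the shape-regularity and norm-equivalence constants. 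This is in fact stronger than, and hence implies, the stated bound $C/m^{k/n_u}$, so under the standard squaring convention the lemma is comfortably recovered.

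As an alternative, for a global spectral basis (Fourier, Chebyshev, or Legendre) I would bypass the tiling picture entirely and argue directly from the decay of the generalized Fourier coefficients: membership in $W^{k,2}$ forces the coefficients to decay rapidly enough that the truncated tail, measured in $L^2$ via Parseval, is of the claimed order once one accounts for how many multi-indices are retained in $n_u$ dimensions. Both routes produce the same rate, so the proof is robust to the precise choice of $\phi_i$.

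The main obstacle I expect is bookkeeping rather than any deep idea: pinning down exactly what ``degree $m$'' means for the specific basis and then getting the counting among the number of retained functions, the per-dimension resolution $h$, and the exponent $k/n_u$ correct. A secondary subtlety, which I would make explicit rather than bury inside ``some constant,'' is that the estimate tacitly requires $\Omega$ to be bounded and regular enough (e.g. Lipschitz or rectangular, admitting a Sobolev extension), so that $C$ depends on $\Omega$ and $k$; without such regularity the Bramble--Hilbert step and the seminorm summation are not justified.
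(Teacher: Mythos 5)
The paper contains no proof of Lemma \ref{lemma:u} to compare against: the lemma is imported verbatim by citation from \cite{errbound,errBound2}, so your argument is necessarily a reconstruction of the standard proof in such approximation-theory sources, and as such it is essentially correct. Both of your routes are the canonical ones --- local Bramble--Hilbert plus scaling for locally supported bases, and Parseval with coefficient decay for global spectral bases, the latter being the relevant one here since the paper actually uses Chebyshev, Legendre, and Fourier expansions --- and your handling of the squaring mismatch is right: the classical estimate gives $\|u-\hat u\|_{L^2(\Omega)}^2 \le C\,m^{-2k/n_u}$, which for $m\ge 1$ implies the weaker exponent $k/n_u$ printed in the lemma (the statement most likely mis-squares the norm). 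One point you should state explicitly rather than leave implicit: your counting identifies $n_u$ with the \emph{domain} dimension of the function being approximated, which is what the cited results require for the exponent $k/n_u$ to appear, whereas the paper defines $n_u$ as ``the dimension of the control input,'' i.e., the codomain. In the paper's own setting the basis functions $\phi_i(t)$ depend on scalar time, so the domain is one-dimensional, the codomain dimension contributes only a constant factor (summing squared errors over components), and your tiling route would in fact deliver the faster rate $m^{-2k}$; the lemma as stated is therefore a conservative (and somewhat garbled) transcription of the cited bounds rather than a sharp statement for this setting. Your final caveat --- that $\Omega$ must be bounded and regular enough to admit the Bramble--Hilbert step and a Sobolev extension, with $C$ depending on $\Omega$ and $k$ --- is likewise warranted and is silently absorbed into ``some constant $C$'' in the paper's version.
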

\begin{theorem}
Consider the optimal control problem with the proposed approximation $\hat{u}(t)$.
Assume $u^*(t)$ is the optimal solution of the optimal control problem.
For any $\epsilon>0$, there exists $m$ such that
\begin{align*}
|J(\hat{u}(t))-J(u^*(t))|<\epsilon.
\end{align*}
\end{theorem}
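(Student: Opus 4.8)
The plan is to reduce the claim to a continuity property of the cost functional $J$ with respect to the control in the $L^2$ norm, and then to combine that continuity with the approximation error bound of Lemma \ref{lemma:u}. I interpret $\hat{u}(t)$ either as the truncation $\tilde u_m$ of $u^*$ onto $V_m=\mathrm{span}\{\phi_1,\ldots,\phi_m\}$ or, more generally, as the minimizer of $J$ over $V_m$ produced by the algorithm; the argument below covers both cases, since in the second case optimality within $V_m$ gives $J(\hat{u})\le J(\tilde u_m)$, while global optimality of $u^*$ gives $J(u^*)\le J(\hat{u})$.

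First I would show that $u\mapsto J(u)$ is Lipschitz continuous on a bounded set of admissible controls. Given two controls $u_1,u_2$ with associated states $x_1,x_2$ solving $\dot x=f(x,u)$, $x(t_0)=x_0$, the assumed piecewise Lipschitz continuity of $f$ together with Gronwall's inequality yields an estimate of the form $\|x_1-x_2\|_\infty\le L_x\|u_1-u_2\|_{L^2}$. Continuity of the terminal cost $\Psi$ in $x(t_f)$ and of the running cost $L$ in $(x,u)$ then propagate this bound to $|J(u_1)-J(u_2)|\le L_J\|u_1-u_2\|_{L^2}$ for some constant $L_J>0$.

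Combining this estimate with Lemma \ref{lemma:u}, and writing $\tilde u_m$ for the best $L^2$ approximation of $u^*$ in $V_m$, gives the sandwich
\begin{align*}
0\le J(\hat{u})-J(u^*)\le J(\tilde u_m)-J(u^*)\le L_J\|\tilde u_m-u^*\|_{L^2}< L_J\sqrt{C}\,m^{-k/(2n_u)}.
\end{align*}
Hence, for any $\epsilon>0$, choosing $m$ large enough that $L_J\sqrt{C}\,m^{-k/(2n_u)}<\epsilon$ establishes $|J(\hat{u})-J(u^*)|<\epsilon$. I note that the two displayed inequalities are precisely the $\liminf$ and $\limsup$ conditions of the epi-convergence definition, so the same estimate simultaneously certifies epi-convergence of the restricted functionals to $J$.

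The main obstacle is the continuity estimate for $J$ under state-dependent switches and delays. For smooth dynamics the Gronwall argument is routine, but when $f$ is only piecewise Lipschitz the solution map $u\mapsto x$ may behave badly near a switching manifold $\{g_{ij}(x)=0\}$; obtaining Lipschitz dependence of $x$ on $u$ then requires controlling the perturbation of the switching times, which holds provided the crossings are transversal and the number of switches is uniformly bounded. In the delayed case the same Gronwall estimate carries over by treating the delayed term $x(t-\tau)$ via the prescribed history $\phi$ and iterating over successive intervals of length $\tau$.
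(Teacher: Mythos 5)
Your plan is correct and is strictly more careful than the paper's own proof, which is a three-line sketch: the paper asserts $|J(\hat{u})-J(u^*)|<|\hat{u}-u^*|$ with no justification (implicitly taking $J$ to be $1$-Lipschitz in the control), interposes an undefined intermediate control $u$ via the triangle inequality $|\hat{u}-u^*|\le |\hat{u}-u|+|u-u^*|$, bounds each term by $\epsilon$ citing Lemma~\ref{lemma:u}, and stops at $2\epsilon$ rather than $\epsilon$. What you add is exactly the content the paper omits: (i) the Gronwall-based Lipschitz estimate for the control-to-state map, and hence $|J(u_1)-J(u_2)|\le L_J\|u_1-u_2\|_{L^2}$, which turns the paper's unjustified first inequality into an actual lemma; (ii) the correct square root of the $L^2$ bound in Lemma~\ref{lemma:u}, giving the explicit rate $L_J\sqrt{C}\,m^{-k/(2n_u)}$ (the paper silently conflates the squared norm with the norm); and (iii) the optimality sandwich $J(u^*)\le J(\hat{u})\le J(\tilde u_m)$, which eliminates the mysterious intermediate $u$ and yields a one-sided quantitative bound instead of the paper's qualitative $2\epsilon$. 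Your closing caveat about transversal crossings and a uniformly bounded number of switches is also well placed: for the state-dependent switched case the continuity estimate is genuinely false near tangential crossings, so the paper's proof is vacuous there without some such hypothesis, and your observation that the delayed case reduces to iterated Gronwall estimates over intervals of length $\tau$ (the method of steps) is the right repair. One residual gap worth flagging in your own sandwich: the inequality $J(u^*)\le J(\hat{u})$ requires $\hat{u}$ to be feasible for the fixed-terminal-state problem, which a finite truncation generically is not; to make this airtight you should either run the comparison on the augmented Lagrangian of Eqn.~\eqref{eqn:constrainedJ}, over which every parameterized control is admissible, or note that the terminal-constraint violation of $\tilde u_m$ is itself $O(\|\tilde u_m-u^*\|_{L^2})$ by the same Gronwall estimate, so your conclusion survives with an adjusted constant.
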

\begin{proof}
Based on the conclusion of Lemma \ref{lemma:u} and the definition of epi-convergence, we can show that
\begin{align*}
|J(\hat{u}(t))-J(u^*(t))| &< |\hat{u}(t)-u^*(t)| \\
&\leq |\hat{u}(t)-u(t)|+|u(t)-u^*(t)|    \\
& \leq \epsilon +\epsilon = 2\epsilon.
\end{align*}
\end{proof}
\section{MATLAB GUI toolbox} \label{sec:GUI}
Figure \ref{fig:gui} shows the design of the GUI where the initial state, terminal state, and terminal time are presented as a text block.
The system dynamics and control cost can be input from two functions.
When clicking the ``run'' button, there will be figures showing the states vs time, the control input vs time, and the convergence curve of the cost function vs iteration numbers.
There are three choices of basis functions: Chebyshev, Legendre, and Fourier.
The step size $\alpha$ can be tuned to compromise the speed and accuracy of convergence.
$\rho$ is the penalty for the augmented quadratic term.
The implementation and detailed instructions on the usage of the GUI can be found in the GitHub repository \footnote{https://github.com/jiegenghua/GOP}.

\begin{figure}[!htp]
    \centering    \includegraphics[width=\linewidth]{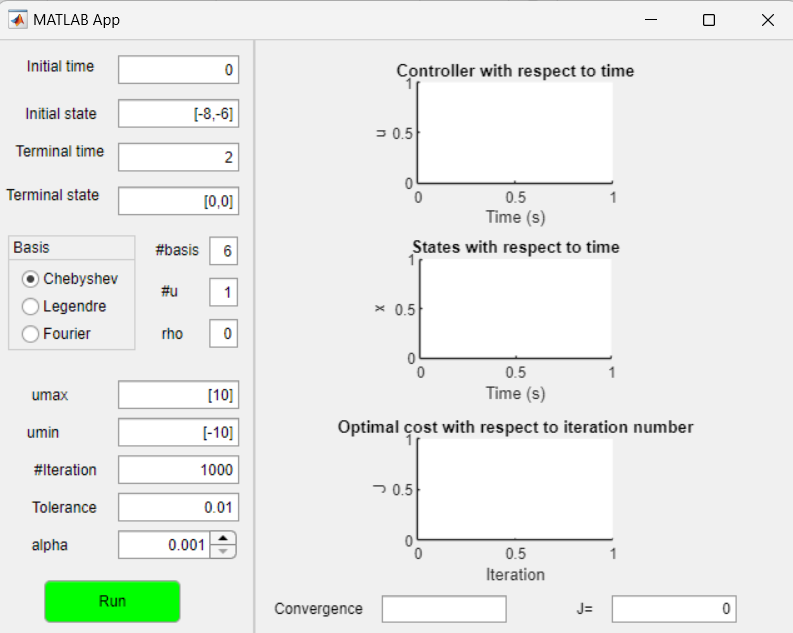}
    \caption{MATLAB GUI for the solver.}
    \label{fig:gui}
\end{figure}
\section{Illustrated examples}\label{sec:simulation}
In this section, we demonstrate the performance of our algorithm in three examples with fixed terminal states and compare it to some existing methods.
\subsection{Example 1: a first order system}
Consider the following system:
\begin{align*}
\dot x = x+u,
\end{align*}
and the objective is to minimize the following cost function
\begin{align*}
\quad J=\int_0^1 (x^2(t)+u^2(t))\mathrm{d}t.
\end{align*}
The initial state is $x(0)=2$ and the terminal state is set up as $x(1)=4$. We let the stopping condition as $tol=0.01$ and sampling time $dt=0.01$ in all the following experiments.
The optimal solution in the last row is the solution using the ICLOCS2 toolbox \cite{ICLOCS2}.

We summarize the simulation results in Table \ref{tab:cartPole}.
\begin{table}[!htp]
    \centering
    \begin{tabular}{|c|c|c|} \hline
        Algorithm &  Optimal Cost ($J$) & Terminal state \\  \hline
         Chebyshev ($m=4$)& $ 8.1746\pm 0.0032$  & $3.9900\pm 0.0000$ \\ \hline
         Legendre ($m=6$)&  $ 8.1744\pm 0.0028$ & $3.9901\pm 0.0000$ \\ \hline
         Fourier ($m=4$, $\alpha=0.01$) & $ 8.1671 \pm 0.0014$ &  $3.9901\pm 0.0001$\\ \hline
        Optimal solution &  8.1445  & 3.9996 \\ \hline
    \end{tabular} 
    \caption{Performance compare for Example 1.}
    \label{tab:cartPole}
    \vspace{-20pt}
\end{table}
As we can see from the table, the solution obtained by the proposed algorithm has a small variance and gets close to the optimal solution.

Fig. \ref{fig:res_fo} (a)-(c) shows the state and control input using different basis functions, and the optimal cost obtained when using different number of basis functions. 
Fig. \ref{fig:heatmap_fo} shows the convergence of the coefficients under 10 experiments.
The last column is the parameters for the Lagrange multiplier. 
All the parameters are initialized as random variables at the beginning.
As we can see, all the parameters converge to a stable value.
\begin{figure*}[!htp]
\centering
\subfigure[]{
\includegraphics[width=0.25\textwidth]{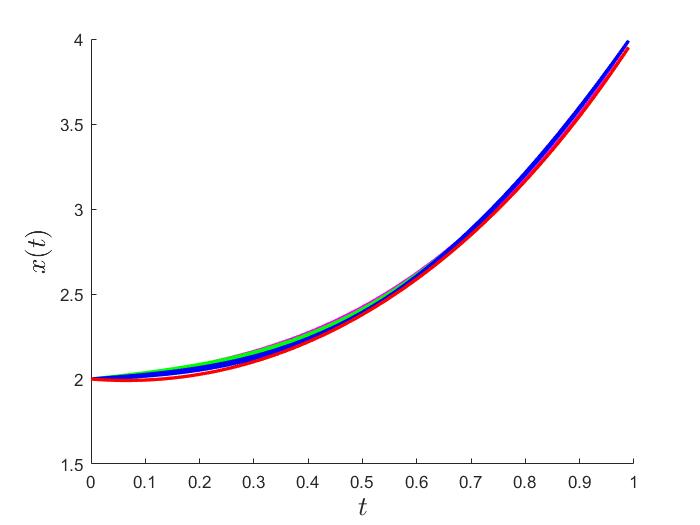}
}
\subfigure[]{
\includegraphics[width=0.25\textwidth]{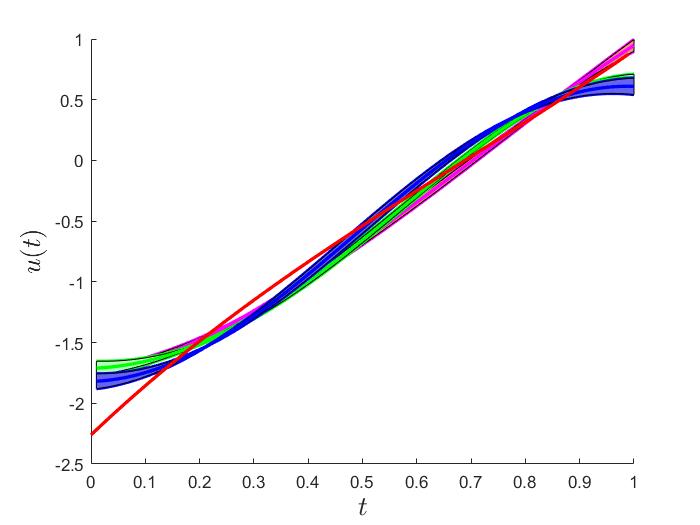}
}
\subfigure[]{
\includegraphics[width=0.25\textwidth]{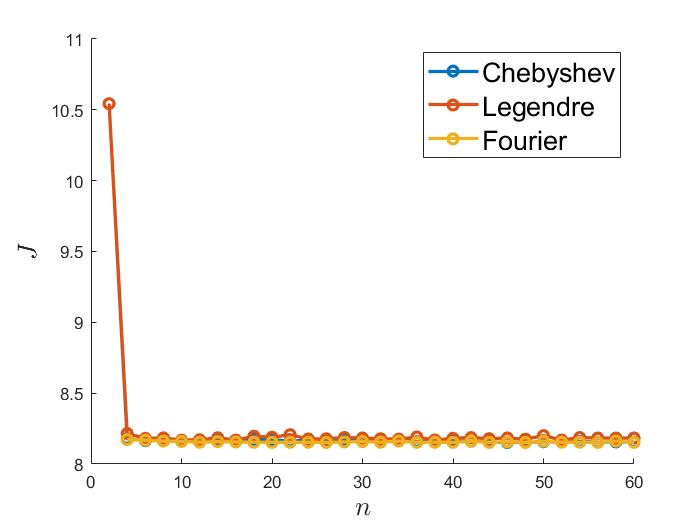}
}
\caption{\textbf{Example 1:} (a) state $x(t)$ under different basis functions (magenta: Chebyshev; green: Legendre; blue: Fourier); (b) control input $u(t)$ under different basis functions (magenta: Chebyshev; green: Legendre; blue: Fourier); (c) cost $J$ with respect to number of basis functions used.}
\label{fig:res_fo}
\end{figure*}

\begin{figure*}[!htp]
\centering
\subfigure[]{
\includegraphics[width=0.25\textwidth]{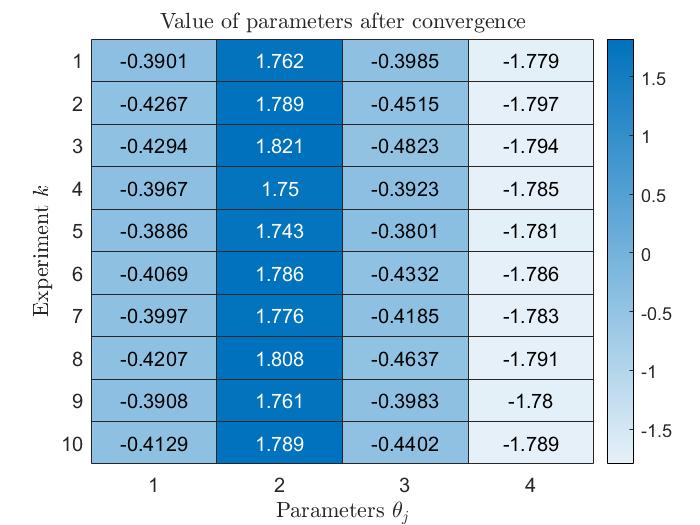}
}
\subfigure[]{
\includegraphics[width=0.25\textwidth]{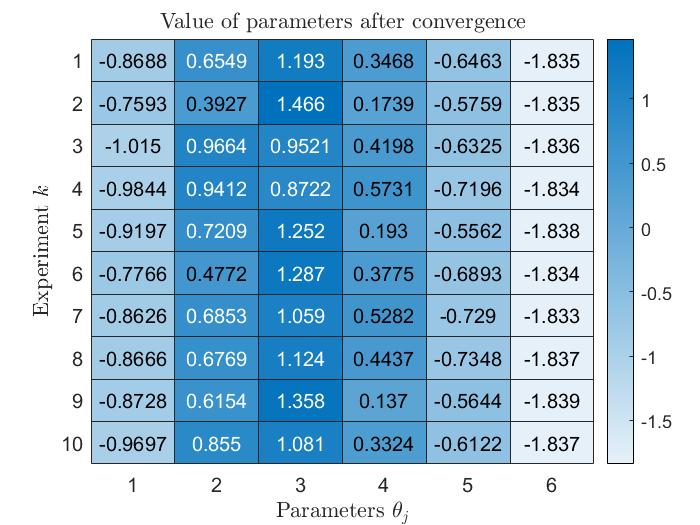}
}
\subfigure[]{
\includegraphics[width=0.25\textwidth]{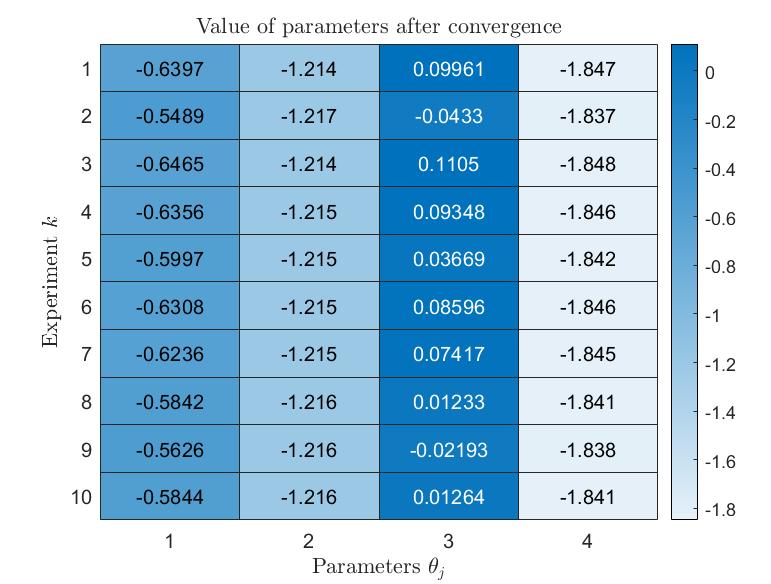}
}
\caption{\textbf{Example 1:} Heatmap of the parameters after convergence using different basis: (a)Chebyshev ($m=4$, $\alpha=0.01$) (b) Legendre ($m=6$, $\alpha=0.01$) (c) Fourier ($m=4$, $\alpha=0.01$).}
\label{fig:heatmap_fo}
\end{figure*}

\subsection{Example 2: State-dependent switched systems}
We consider the same system as in \cite{Mi_SCC} which has four regional dynamics:
\begin{align*}
\dot x = A_q x +B u,
\end{align*}
with
\begin{align*}
&A_1 = \begin{bmatrix}
-1 & 2 \\
-2 & -1 \\
\end{bmatrix},\; A_2 = \begin{bmatrix}
-1 & -2 \\
1 & -0.5 \\
\end{bmatrix},\; A_3=\begin{bmatrix}
-0.5 & -5 \\
1 & -0.5 \\
\end{bmatrix}, \\&
A_4 = \begin{bmatrix}
-1 & 0 \\
2 & -1 \\
\end{bmatrix}, \; B=\begin{bmatrix}
1 \\
1
\end{bmatrix}.    
\end{align*}
The stage cost function is $L = \frac{1}{2}(x^\top x+ u^2)$ for all regions and the fixed terminal time is $t_f = 2$. 
The initial state is $x_0 = (-8,-6)^\top$.
The control input $u(t)$ is constrained in $[-10,10]$.
The switching interfaces are $m_{12} = x_2+5 = 0$, $m_{13} = x_1+5=0$, $m_{23}=-m_{32}=x_1-x_2=0$, $m_{24}=-m_{42}=x_1+2=0$, $m_{34}=-m_{43}=x_2+2=0$.
Moreover, we fix the terminal state as $[0,0]^\top$.

We compare our result with the result obtained in \cite{graphsearch}.
For this system, we observed a slightly high variance of the control and state at the switching interface $x_2=-5$ and $x_2=-2$ as shown in Fig. \ref{fig:multiregionSys}(b).
This is also shown in Fig. \ref{fig:multiregionBasis} where at each experiment, the converged parameters are slightly different.
This is due to the high sensitivity at the switching interfaces when doing gradient estimation.
\begin{table}[!htp]
    \centering
    \begin{tabular}{|c|c|c|} \hline
        Algorithm &  Optimal Cost & Terminal state \\  \hline
         Chebyshev ($m= 28$)& $ 21.2891\pm 0.7610$  &  $ \begin{bmatrix}
           -0.0572\\ -0.0100
         \end{bmatrix}\pm \begin{bmatrix}
          0.1554\\ 0.2644    
         \end{bmatrix}$ \\ \hline
         Legendre ($m= 28$)&  $21.0954 \pm 0.3323$ &  $\begin{bmatrix}
           -0.0042\\ -0.0283  \end{bmatrix} \pm \begin{bmatrix}
             0.0442\\ 0.0827 
           \end{bmatrix}$ \\ \hline
         Fourier ($m=27$) & $ 21.4617 \pm 0.2486$ &  $\begin{bmatrix}-0.0400\\ -0.0814
\end{bmatrix} \pm \begin{bmatrix}0.1042\\0.1836 \end{bmatrix}$ \\ \hline
        Benchmark \cite{graphsearch} &  21.6980  &  $\begin{bmatrix}
         0\\0   
        \end{bmatrix}$ \\ \hline
    \end{tabular} 
    \caption{Performance compare for Example 2.}
    \label{tab:multiregion}
    \vspace{-20pt}
\end{table}

\begin{figure*}[!htp]
\centering
\subfigure[]{
\includegraphics[width=0.35\textwidth]{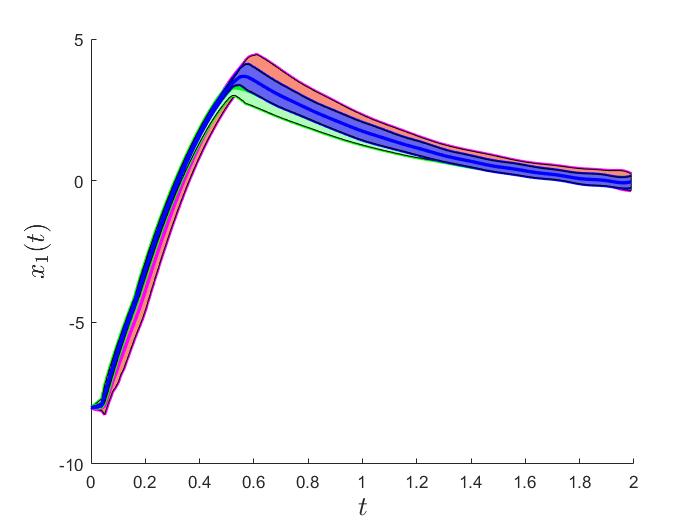}
}
\subfigure[]{
\includegraphics[width=0.35\textwidth]{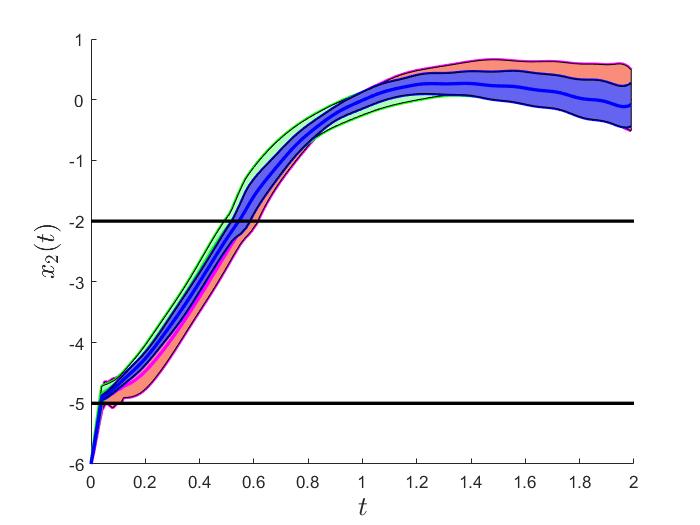}
}
\subfigure[]{
\includegraphics[width=0.35\textwidth]{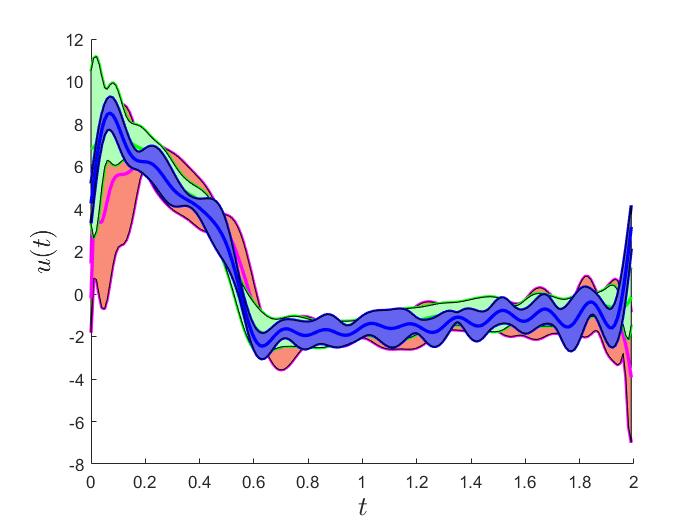}
}
\subfigure[]{
\includegraphics[width=0.35\textwidth]{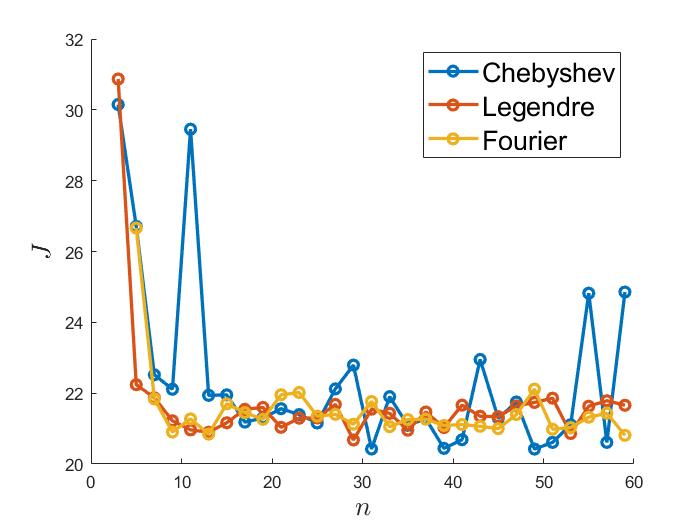}
}
\caption{\textbf{Example 2:} (a) state $x_1(t)$ under different basis functions (magenta: Chebyshev; green: Legendre; blue: Fourier); (b) state $x_2(t)$; (c) control input $u(t)$ under different basis functions (magenta: Chebyshev; green: Legendre; blue: Fourier); (d) cost $J$ with respect to number of basis functions used.}
\label{fig:multiregionSys}
\end{figure*}
\begin{figure*}[!htp]
\centering
\subfigure[]{
\includegraphics[width=0.25\textwidth]{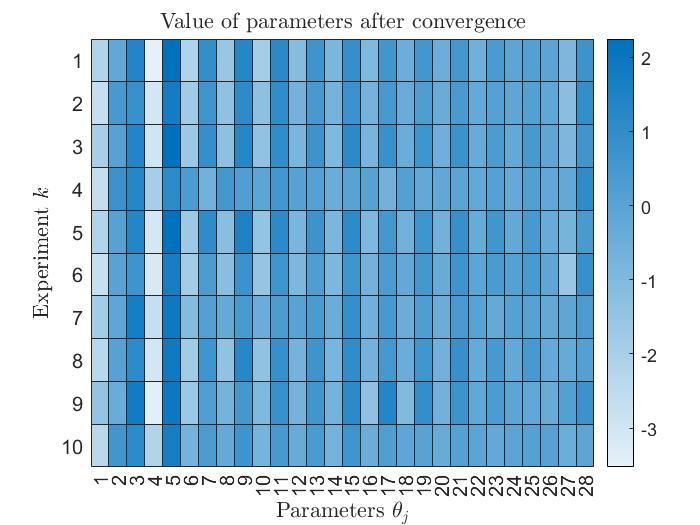}
}
\subfigure[]{
\includegraphics[width=0.25\textwidth]{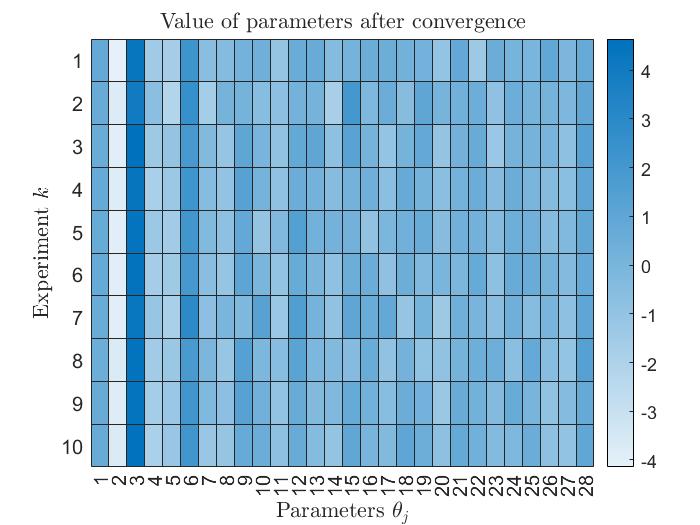}
}
\subfigure[]{
\includegraphics[width=0.25\textwidth]{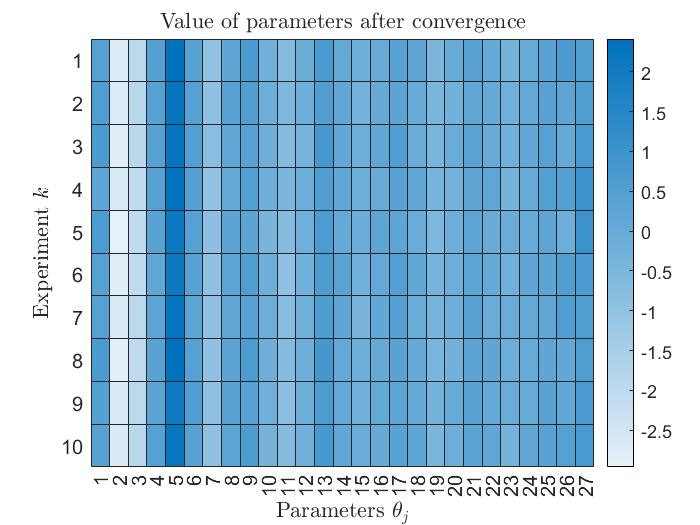}
}
\caption{\textbf{Example 2:} Heatmap of the parameters after convergence using different basis: (a)Chebyshev ($m=28$, $\alpha=0.01$) (b) Legendre ($m=28$, $\alpha=0.01$) (c) Fourier ($m=27$, $\alpha=0.01$).}
\label{fig:multiregionBasis}
\end{figure*}

\subsection{Example 3: Time-delay systems}
In this example, we use the same time delay system considered in \cite{Mi_SCC}.
Moreover, we fix the terminal state as $x(t_f)=0$.
\begin{align*}
\dot x(t) &= x(t)+x(t-1)+u(t), \quad 0\leq t \leq 2\\
x(t) &= 1,\quad -1 \leq t <0.
\end{align*}
The objective is to find the optimal control $u(t)$ to minimize the following cost functional:
\begin{align*}
J=\int_0^2 (x^2(t)+u^2(t)) \rm{d}t.
\end{align*}
The optimal solution is obtained using the augmented method mentioned in \cite{Mi-TD} by solving a fourth-order ordinary differential equation with the boundary condition to be $[\underline{x}(0)=1, \overline{x}(0)=\underline{x}(1), \overline{\lambda}(0)=\underline{\lambda}(1), \overline{x}(1)=0]$.
The optimal controller is thus $u(t)=-\lambda(t)$.
From Table \ref{tab:TDSys}, Fig. \ref{fig:TDSSys}, and Fig. \ref{fig:TDS_basis}, we can see the proposed algorithm obtains a cost close to the optimal solution and the terminal state is within the error tolerance.
The terminal state error can be further reduced by shrinking the tolerance value.
\begin{table}[!htp]
    \centering
    \begin{tabular}{|c|c|c|} \hline
        Algorithm &  Optimal Cost & Terminal state\\  \hline
         Chebyshev ($m=40$)& $ 6.7323\pm 0.0017 $  & $ -0.01\pm 0$ \\ \hline
         Legendre ($m=10$)&  $ 6.5891\pm 0.1089$ &  $ -0.0034\pm 0.0093$\\ \hline
         Fourier ($m=40$) & $ 6.6012 \pm 0.0100$ & $ 0.0000\pm 0.0104$ \\ \hline
        Optimal solution &  6.4880  & 0 \\ \hline
    \end{tabular} 
    \caption{Performance compare for Example 3.}
    \label{tab:TDSys}
    \vspace{-20pt}
\end{table}

\begin{figure*}[!htp]
\centering
\subfigure[]{
\includegraphics[width=0.25\textwidth]{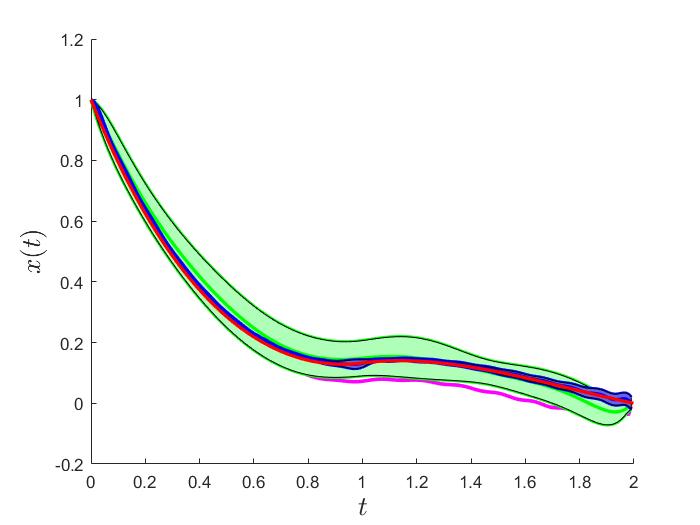}
}
\subfigure[]{
\includegraphics[width=0.25\textwidth]{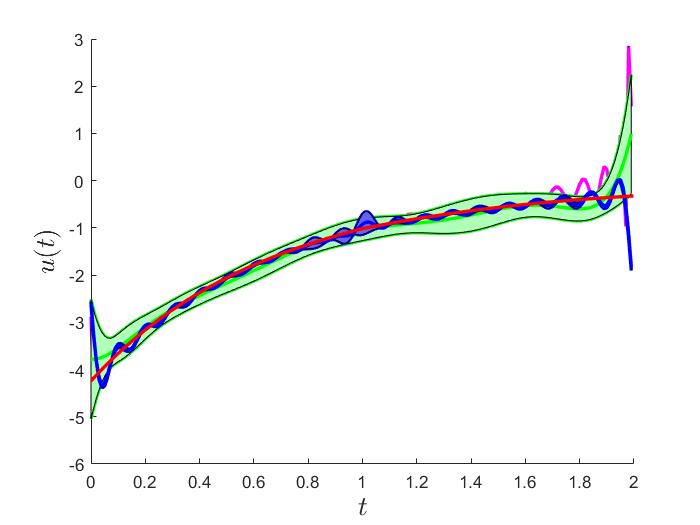}
}
\subfigure[]{
\includegraphics[width=0.25\textwidth]{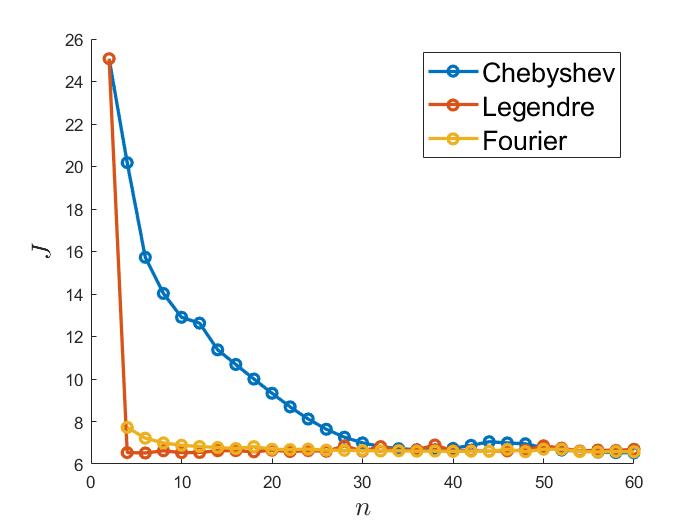}
}
\caption{\textbf{Example 3:} (a) state $x(t)$ under different basis functions (magenta: Chebyshev; green: Legendre; blue: Fourier); (b) control input $u(t)$ under different basis functions (magenta: Chebyshev; green: Legendre; blue: Fourier); (c) cost $J$ with respect to number of basis functions used.}
\label{fig:TDSSys}
\end{figure*}

\begin{figure*}[!htp]
\centering
\subfigure[]{
\includegraphics[width=0.25\textwidth]{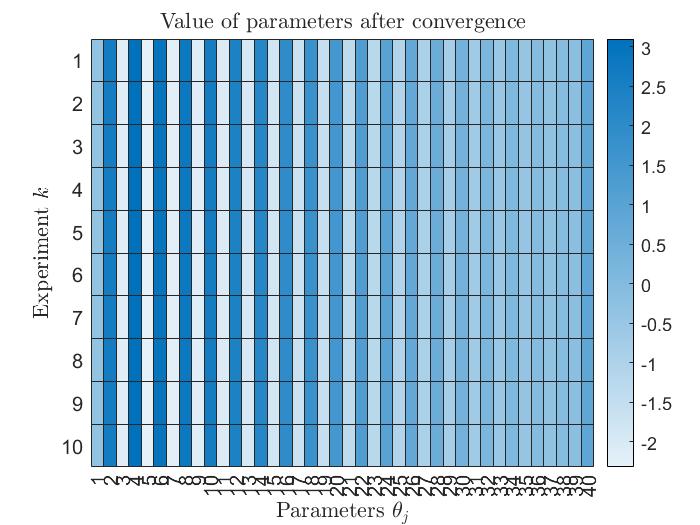}
}
\subfigure[]{
\includegraphics[width=0.25\textwidth]{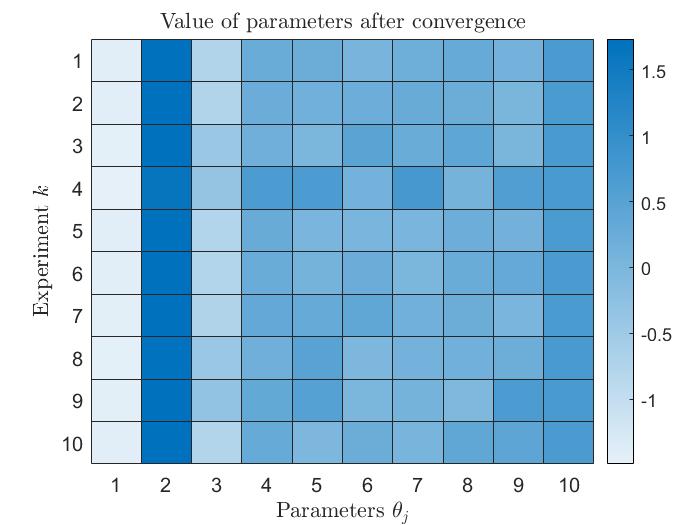}
}
\subfigure[]{
\includegraphics[width=0.25\textwidth]{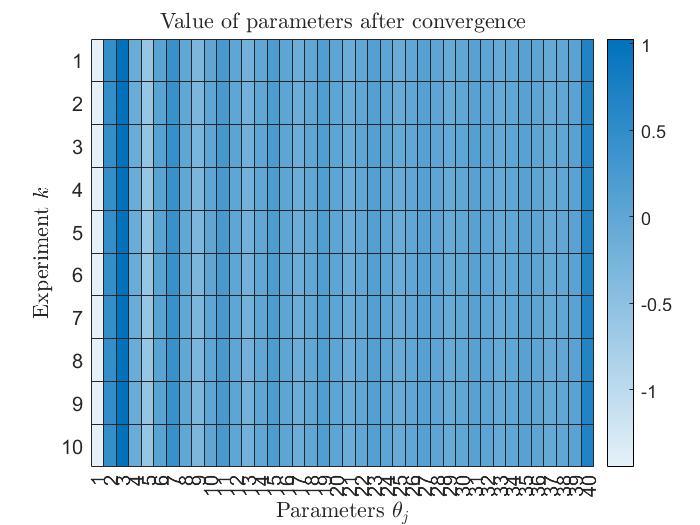}
}
\caption{\textbf{Example 3:} Heatmap of the parameters after convergence using different basis: (a)Chebyshev ($m=40$, $\alpha=0.01$) (b) Legendre ($m=10$, $\alpha=0.01$) (c) Fourier ($m=40$, $\alpha=0.01$).}
\label{fig:TDS_basis}
\end{figure*}

\section{Conclusions}\label{sec:conclusion}
In this article, we proposed a model-free optimal control method.
The control law and the Lagrange multipliers were parameterized gradients were obtained using least square methods.
The convergence of the proposed method was analyzed.
Examples including a continuous system, a switched system, and a time-delay system were provided to illustrate the algorithm.
One limit of this algorithm is that it relies heavily on tuning parameters to obtain convergence when the system has high dimensions.
We will be addressing these limitations in future research using neural networks as approximators.

\bibliographystyle{./IEEEtran} 
\bibliography{./IEEEabrv,./IEEEexample}

\end{document}